\DeclareMathAlphabet{\mathpzc}{OT1}{pzc}{m}{it}
\newcommand{\macro}[3]{\newcommand{#1}[#3]{#2}}
\newcommand\macros[4]
                     \newenvironment{#1}[1][#2]{\par\vspace{1ex}
                       #3 \hspace{0.5em}#4}
                                    {\nopagebreak%
                                      \strut\nopagebreak%
                                      \par\vspace{2ex}
                                    }
\macro{\twd}{\operatorname{tw}(#1)}{1}
\macro{\pw}{\operatorname{pw}(#1)}{1}
\macro{\bwd}{\operatorname{bw}(#1)}{1}
\macro{\cwd}{\operatorname{cw}(#1)}{1}
\macro{\rwd}{\operatorname{rw}(#1)}{1}
\macro{\sn}{\operatorname{sn}(#1)}{1}
\macro{\bn}{\operatorname{bn}(#1)}{1}
\macro{\Ht}{\operatorname{h}(#1)}{1}	
\macro{\brw}{\operatorname{brw}(#1)}{1}
\macro{\Frw}{{#1}\textrm{-}\operatorname{rwd}(#2)}{2}
\macro{\Qrw}{\Frwd{\gfq}{#1}}{1}
\macro{\frw}{\Frwd{\bF}{#1}}{1}
\macro{\Fbrw}{{#1}\textrm{-}\operatorname{brwd}(#2)}{2}
\macro{\fbrw}{\Fbrwd{\bF}{#1}}{1}
\macro{\lrw}{\operatorname{lrw}(#1)}{1}
\macro{\lbw}{\operatorname{lbw}(#1)}{1}
\macro{\lcw}{\operatorname{lcw}(#1)}{1}
\macro{\const}{\mathbf{#1}}{1}
\macro{\angl}{\mathop\langle #1 \mathop\rangle}{1}
\macro{\up}{\ulcorner #1\urcorner}{1}
\macro{\card}{\left|{#1}\right|}{1}
\macro{\floor}{\left\lfloor{#1}\right\rfloor}{1}
\macro{\ceil}{\left\lceil{#1}\right\rceil}{1}
\macro{\pare}{\left({#1}\right)}{1}
\macro{\crochet}{\left[{#1}\right]}{1}
\macro{\set}{\left\{{#1}\right\}}{1}
\macro{\range}{\set{{#1},\ldots,{#2}}}{2}
\macro{\mat}{M_{#1}}{1}
\macro{\matind}{{#1}[{#2},{#3}]}{3}
\macro{\matgind}{\matind{\matg}{#1}{#2}}{2}
\macro{\leaves}{\operatorname{L}_{#1}}{1}
\macro{\cutrk}{{#1}\textrm{-}\operatorname{cutrk}}{1}
\macro{\comp}{(X^{#1},#2\backslash X^{#1})}{2}
\macro{\supp}{\mathpzc{u}(#1)}{1}
\macro{\field}{\mathbb{F}_{#1}}{1}
\macro{\subg}{#1\textrm{-}#2}{2}
\macro{\bicutrk}{{#1}\textrm{-}\operatorname{bicutrk}}{1}
\def\gfq{\operatorname{\field{4}}}
\def\rk{\operatorname{rk}}
\def\matg{\mat{G}}
\def\ucutrk{\operatorname{cutrk}}
\def\bF{\mathbb{F}}
\def\cD{\mathcal{D}}
\newtheorem{thm}{Theorem}
\newtheorem{cor}{Corollary}
\newtheorem{prop}{Proposition}
\title{A Note on Graphs of Linear Rank-Width $1$}
\date{\today}
\author[B. Bui-Xuan]{Binh-Minh
  Bui-Xuan} 
\address{Universit{\'e} Paris 6, LIP6, CNRS, France}
\author[M.M. Kant\'e \and V. Limouzy]{Mamadou Moustapha Kant\'e \and
  Vincent Limouzy}
\address{Clermont-Universit{\'e}, Universit{\'e} Blaise Pascal, LIMOS,
  CNRS, France}
\thanks{M.M. Kant\'e and V. Limouzy are supported by
    the French Agency for Research under the 
    DORSO project (2011-2015)}
\begin{document}
\begin{abstract}
We prove that a connected graph has linear rank-width $1$ if and only
if it is a distance-hereditary graph and its split decomposition tree
is a path. An immediate consequence is that one can decide in linear
time whether a graph has linear rank-width at most $1$, and give an
obstruction if not. Other immediate consequences are several
characterisations of graphs of linear rank-width $1$. In particular a
connected graph has linear rank-width $1$ if and only if it is locally
equivalent to a caterpillar if and only if it is a vertex-minor of a
path [O-joung Kwon and Sang-il Oum, Graphs of small rank-width are
  pivot-minors of graphs of small tree-width, to appear in Discrete
  Applied Mathematics] if and only if it does not contain the co-$K_2$
graph, the Net graph and the $5$-cycle graph as vertex-minors [Isolde
  Adler, Arthur M. Farley and Andrzej Proskurowski, Obstructions for
  linear rank-width at most $1$, to appear in Discrete Applied
  Mathematics].
\end{abstract}
\maketitle

\section{Introduction}

In their investigations for the recognition of graphs of bounded
\emph{clique-width} \cite{CourcelleO00} Oum and Seymour introduced the
notion of \emph{rank-width} \cite{OumS06} of a graph. Rank-width
appeared to have several nice combinatorial properties, in particular
it is related to the \emph{vertex-minor} inclusion, and have proven
the last years its importance in studying the structure of graphs of
bounded clique-width \cite{Oum05a,OumS06,CourcelleO07,HlinenyO08}. \emph{Linear
  rank-width} is related to rank-width in the same way
\emph{path-width} \cite{RobertsonS83} is related to \emph{tree-width}
\cite{RobertsonS86}.  Indeed, linear rank-width is the linearised
version of rank-width and studying graphs of bounded linear rank-width
is a first step in studying the structure of graphs of bounded
rank-width which is not yet well understood. Not much is known about
linear rank-width. The computation of the linear rank-width of forests
is investigated in \cite{AdlerK13} and it is proved in \cite{KwonO13}
that graphs of linear rank-width $k$ are vertex-minors of graphs of
path-width at most $k+1$. Ganian defined in \cite{Ganian10} the notion
of \emph{thread graphs} and proved that they correspond exactly to
graphs of linear rank-width $1$ and authors of \cite{AdlerFP13} used
it to exhibit the set of vertex-minor obstructions for linear
rank-width $1$.  In this paper we investigate in a different way the
structure of graphs of linear rank-width $1$.

Distance hereditary graphs \cite{BandeltM86} are a well-known and
well-studied class of graphs because of their multiple nice
algorithmic properties. They admit several characterisations, in
particular they correspond exactly to graphs of rank-width at most $1$
\cite{Oum05a} and are the graphs that are totally decomposable with
respect to \emph{split decomposition} \cite{CunninghamE80}. Split
decomposition is a graph decomposition introduced by Cunningham and
Edmonds and has proved its importance in algorithmic and structural
graph theory (see for instance
\cite{Bouchet87,Bouchet88,Bouchet94,Courcelle06,Rao08} to cite a
few). We give in this paper the following characterisation of graphs
of linear rank-width $1$ which implies all the known characterisations
of graphs of linear rank-width $1$. 

\begin{thm}
\label{thm:lrw1}
 A  connected graph $G$ has linear rank width $1$ if and only
 if it is a distance hereditary graph and its split decomposition tree
 is a path.
\end{thm}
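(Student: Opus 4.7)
The plan is to prove each direction of the equivalence separately, relying on three classical facts: rank-width is at most linear rank-width, rank-width at most $1$ coincides with being distance-hereditary \cite{Oum05a}, and distance-hereditary graphs are precisely those whose split decomposition has every component a clique or a star \cite{CunninghamE80}.

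For the forward direction, suppose $G$ has linear rank-width $1$. Then $G$ has rank-width at most $1$ and is therefore distance-hereditary, so every component of its split decomposition tree $T$ is a clique or a star. It remains to show that $T$ is a path. I argue by contradiction: assume some node $v$ of $T$ has degree at least $3$. Removing $v$ from $T$ yields at least three subtrees, corresponding to pairwise-disjoint vertex sets $A_1, A_2, A_3 \subseteq V(G)$ that are joined to each other by complete bipartite port-to-port connections whose shape is governed by whether $v$ is a clique or a star node. Given any linear order of $V(G)$, the restriction to $A_1 \cup A_2 \cup A_3$ must interleave these three sets so that at least one, say $A_2$, sees elements of $A_1$ and $A_3$ on opposite sides of some of its vertices. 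Choosing a prefix/suffix cut that places a nontrivial port of $A_1$ and a nontrivial port of $A_3$ on opposite sides while isolating at least one port of $A_2$, the bipartite adjacency across this cut should contain two linearly independent neighborhood patterns (one coming from the $A_1$-$A_2$ connection and one from the $A_2$-$A_3$ connection), hence cut-rank at least $2$, contradicting linear rank-width $1$.

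For the backward direction, suppose $G$ is distance-hereditary and $T$ is a path $t_1 - t_2 - \cdots - t_k$, with each $t_i$ a clique or a star and markers $p_i^-, p_i^+$ pointing towards $t_{i-1}, t_{i+1}$ respectively. I build a linear order of $V(G)$ by walking from $t_1$ to $t_k$ and at each $t_i$ listing its true vertices in an order adapted to its type: in a clique, all true vertices are interchangeable since they share the same neighborhood outside $t_i$; in a star, the true leaves attached to $p_i^-$ come first, then the center (if it is a true vertex), then the true leaves attached to $p_i^+$. Verifying that every prefix/suffix cut has cut-rank at most $1$ reduces to two cases. A cut falling between the blocks for $t_i$ and $t_{i+1}$ coincides with the split between $V(t_1) \cup \cdots \cup V(t_i)$ and the rest, and hence has rank $1$ by definition of a split. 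A cut falling inside the block for some $t_i$ has rank at most $1$ because the vertices on each side have uniform neighborhoods, which is enforced by the clique or star structure of $t_i$ together with the uniformity of the ports towards both ends of the path.

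The main obstacle, I expect, is the contradiction argument in the forward direction: pinpointing the exact prefix/suffix cut that produces a rank-$2$ submatrix needs a careful case analysis on the type of $v$ and on the interleaving of $A_1, A_2, A_3$ in the linear order, and the star case seems particularly delicate due to the asymmetric role of the center.
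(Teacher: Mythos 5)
Your backward direction is essentially the paper's argument (Proposition~\ref{prop:sdtlrw1}): walk along the path, concatenate the vertices block by block, and observe that every prefix cut is a split. In fact the paper shows via Theorem~\ref{thm:CE80} that \emph{any} ordering inside each block works, so your careful intra-block ordering (leaves toward $p_i^-$, centre, leaves toward $p_i^+$) is unnecessary, though harmless.

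The forward direction is where you depart from the paper, and where there is a genuine gap. You want to exhibit, for every linear order, a prefix cut of rank $2$ whenever $T_G$ has a node $v$ of degree $\geq 3$, and you propose to get two independent rows ``one coming from the $A_1$--$A_2$ connection and one from the $A_2$--$A_3$ connection.'' But when $b_G(v)$ is a star whose centre is a true vertex $c$ of $G$, the sets $A_1,A_2,A_3$ are pairwise \emph{non-adjacent}: every cross-adjacency among them funnels through the single vertex $c$, so any submatrix built only from ports of the $A_i$ has rank $1$ (the induced graph on $c$ and one port per branch is just $K_{1,3}$, which has linear rank-width $1$). To get rank $2$ in that case you must exploit that each $A_i$ contains at least two vertices with distinct external behaviour -- this is exactly why the Net graph, and not $K_{1,3}$, is the obstruction -- and your sketch does not identify where that second independent row comes from; a similar issue (the two candidate rows may coincide) arises even in the clique case. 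The paper sidesteps this entire case analysis: since $\pi$ has cutrank $1$, every prefix bipartition $\{X_j,\overline{X_j}\}$ is a split; a \emph{strong} split overlaps no split, so a short pigeonhole argument (Proposition~\ref{prop:lrw1sdt}) forces every strong split to be of the form $\{X_i,\overline{X_i}\}$; and a degree-$3$ node of $T_G$ would give three pairwise disjoint strong-split sides, which cannot all be prefixes or suffixes of one linear order. I would recommend replacing your contradiction argument by this overlap argument; if you insist on the direct route, you need to carry out the three-way case analysis on the type of $b_G(v)$ (clique, star centred at a marker, star centred at a true vertex) and, in each case, exhibit the two vertices per branch whose restricted neighbourhoods are independent.
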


A first consequence of this theorem is that we can derive in a more
direct way than in \cite{AdlerFP13} the set of induced subgraph (or
vertex-minor of pivot-minor) obstructions for linear rank-width
$1$. Another consequence is a simple linear time algorithm for
recognising graphs of linear rank-width $1$ (the only known one prior
to this algorithm is the one that uses logical tools
\cite{CourcelleO07} and is not really practical). Our algorithm gives
moreover an obstruction if it exists. Notice that a polynomial time
algorithm for recognising graphs of linear rank-width $1$ is a
consequence of the characterisation in terms of \emph{thread graphs} given in
\cite{Ganian10}.

The paper is organised as follows. Some definitions and notations are
given in Section \ref{sec:2}. In Section \ref{sec:3} we introduce the
notion of split decomposition and prove our main theorem. We derive
several characterisations and give a simple linear time algorithm
(with a certificate) for the recognition of graphs of linear
rank-width at most $1$.

  

\section{Preliminaries}\label{sec:2}


For two sets $A$ and $B$, we let $A\backslash B$ be the set $\{x\in
A\mid x\notin B\}$.  We often write $x$ to denote the set $\{x\}$.
For sets $R$ and $C$, an \emph{$(R,C)$-matrix} is a matrix where the
rows are indexed by elements in $R$ and columns indexed by elements in
$C$. For an $(R,C)$-matrix $M$, if $X\subseteq R$ and $Y\subseteq C$,
we let $\matind{M}{X}{Y}$ be the sub-matrix of $M$ the rows and the
columns of which are indexed by $X$ and $Y$ respectively. We let $\rk$
be the matrix rank-function (the field will be clear from the
context). 

Our graph terminology is standard, see for instance
\cite{Diestel05}. A graph $G$ is a pair $(V(G),E(G))$ where $V(G)$ is
the set of vertices and $E(G)$, the set of edges, is a set of
unordered pairs of $V(G)$.  An edge between $x$ and $y$ in a graph is
denoted by $xy$ (equivalently $yx$). The subgraph of a graph $G$
induced by $X\subseteq V(G)$ is denoted by $G[X]$. Two graphs $G$ and
$H$ are \emph{isomorphic} if there exists a bijection $\varphi:V(G)\to
V(H)$ such that $xy\in E(G)$ if and only if $\varphi(x)\varphi(y)\in
E(H)$. All graphs are finite and loop-free.

A \emph{tree} is an acyclic connected graph. In order to avoid
confusions in some lemmas, we will call \emph{nodes} the vertices of
trees. The nodes of degree $1$ are called \emph{leaves}.  A
\emph{path} is a tree the vertices of which have all degree $2$,
except two that have degree $1$. A \emph{caterpillar} is a tree such
that the removal of leaves results in a path.

A graph is \emph{distance hereditary} if its induced subpaths are
isometric \cite{BandeltM86}. Examples of distance hereditary graphs
are trees, cliques, etc. There exist several characterisations of
distance hereditary graphs. See for instance \cite[Theorem
  1]{KanteR09} for a summary of some known characterisations of
distance hereditary graphs.

The adjacency matrix of a graph $G$ is the $(V(G),V(G))$-matrix $A_G$
over $GF(2)$ where $A_G[x,y]=1$ if and only if $xy\in E(G)$. For a
graph $G$, let $x_1,\ldots,x_n$ be a linear ordering of its
vertices. For each index $i$, we let $X_i:=\{x_1,\ldots, x_i\}$ and
$\overline{X_i}:=\{x_{i+1},\ldots,x_n\}$. The \emph{cutrank} of the
ordering $x_1,\ldots,x_n$ is defined as 
\begin{align*}
  \ucutrk_G(x_1,\ldots,x_n) &= \max
  \{\rk\left(A_G[X_i,\overline{X_i}]\right) \mid 1\leq i \leq n\}.
\end{align*}

The \emph{linear rank-width} of a graph $G$ is defined as 
\begin{align*}
  \lrw{G} &= \min \{\ucutrk_G(x_1,\ldots,x_n)\mid
  x_1,\ldots,x_n\ \textrm{is a linear ordering of $V(G)$}\}.
\end{align*}

The linear ordering $c,d,a,b,g,e,f$ of the graph in Figure
\ref{fig:ex-split} has cutrank $1$. It is worth noticing that if a
graph $G$ is not connected and its connected components are
$C_1,\ldots, C_t$, then $\lrw{G} = \max\{\lrw{C_i}\}_{1\leq i\leq t}$
since it suffices to concatenate in any order the linear ordering of
optimal cutrank of its connected components.

For a graph $G$ and a vertex $x$ of $G$, the \emph{local
  complementation at $x$} consists in replacing the subgraph induced
on the neighbours of $x$ by its complement. The resulting graph is
denoted by $G*x$. A graph $H$ is \emph{locally equivalent} to a graph
$G$ if $H$ is obtained from $G$ by applying a sequence of local
complementations, and $H$ is called a \emph{vertex-minor} of $G$ if
$H$ is isomorphic to an induced subgraph of a graph locally equivalent
to $G$.  The following relates vertex-minor and linear rank-width.

\begin{prop}[\cite{Oum05a}]\label{prop:2.1} Let $G$ and $H$ be two
  graphs. If $H$ is locally equivalent to $G$, then
  $\lrw{H}=\lrw{G}$. If $H$ is a vertex-minor of $G$, then $\lrw{H}
  \leq \lrw{G}$.
\end{prop}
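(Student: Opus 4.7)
The plan is to decompose Proposition 2.1 into two independent assertions and prove each one by working directly with the definition of $\lrw{\cdot}$ via cut-rank of linear orderings.

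Step 1. The first assertion reduces, by induction on the length of the sequence of local complementations witnessing local equivalence, to the single-step claim: for every graph $G$ and every vertex $x\in V(G)$, $\lrw{G*x}=\lrw{G}$. The second assertion then reduces, by unfolding the definition of vertex-minor (induced subgraph of a locally equivalent graph) and combining it with the first assertion, to the induced-subgraph claim: if $H=G[S]$ for some $S\subseteq V(G)$, then $\lrw{H}\le \lrw{G}$.

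Step 2. The heart of the argument is the following rank-preservation lemma, which I would state and prove first: for every bipartition $(A,B)$ of $V(G)$ and every vertex $x\in V(G)$,
\begin{align*}
  \rk\bigl(A_G[A,B]\bigr)=\rk\bigl(A_{G*x}[A,B]\bigr)
\end{align*}
over $GF(2)$. To see this, write $N_A=N_G(x)\cap A$ and $N_B=N_G(x)\cap B$. The definition of local complementation shows that $A_{G*x}[A,B]$ differs from $A_G[A,B]$ exactly on entries indexed by $N_A\times N_B$, where the entries are flipped; equivalently, $A_{G*x}[A,B] = A_G[A,B] + \mathbb{1}_{N_A}\mathbb{1}_{N_B}^{\mathsf T}$ over $GF(2)$. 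If $x\in A$ then the row of $x$ in $A_G[A,B]$ is $\mathbb{1}_{N_B}^{\mathsf T}$, so the modification amounts to adding the row of $x$ to every row indexed by $N_A$, an elementary row operation that preserves rank. If $x\in B$, the symmetric column-operation argument applies. (This is the standard cut-rank invariance from Oum's work.)

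Step 3. With this lemma in hand, the single-step claim is immediate: for any linear ordering $x_1,\ldots,x_n$ of $V(G)$ and any $i$, the lemma applied to $(A,B)=(X_i,\overline{X_i})$ gives $\rk(A_G[X_i,\overline{X_i}])=\rk(A_{G*x}[X_i,\overline{X_i}])$, so $\ucutrk_G(x_1,\ldots,x_n)=\ucutrk_{G*x}(x_1,\ldots,x_n)$. Taking minima over linear orderings of $V(G)=V(G*x)$ yields $\lrw{G*x}=\lrw{G}$. For the induced-subgraph claim, let $x_1,\ldots,x_n$ be a linear ordering of $V(G)$ with cutrank $\lrw{G}$, and let $y_1,\ldots,y_m$ be its restriction to $S$. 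For each prefix $Y_i\subseteq S$ there is an index $j$ with $Y_i=X_j\cap S$ and $\overline{Y_i}=\overline{X_j}\cap S$, so $A_H[Y_i,\overline{Y_i}]$ is a submatrix of $A_G[X_j,\overline{X_j}]$, giving $\rk(A_H[Y_i,\overline{Y_i}])\le\rk(A_G[X_j,\overline{X_j}])\le \lrw{G}$. Hence $\lrw{H}\le\lrw{G}$, and finally isomorphism invariance of $\lrw{\cdot}$ (obvious from the definition) handles the passage from induced subgraph to general vertex-minor.

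Step 4. I do not expect serious obstacles: the only substantive content is the cut-rank invariance lemma, and the main point of care is the split on whether $x\in A$ or $x\in B$, which is handled symmetrically by row versus column operations. Everything else is direct bookkeeping with the definition of cutrank of a linear ordering.
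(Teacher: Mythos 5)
Your proof is correct. The paper does not prove this proposition at all --- it is quoted directly from Oum's work \cite{Oum05a} --- and your argument (the cut-rank invariance $\rk(A_{G*x}[A,B])=\rk(A_G[A,B])$ via elementary row or column operations depending on whether $x\in A$ or $x\in B$, followed by induction on the sequence of local complementations and submatrix monotonicity for induced subgraphs) is exactly the standard proof given in that reference.
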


\section{Linear rank-width $1$} \label{sec:3}

We prove in this section our main theorem.  Let us make precise some
terminologies and notations.

\subsection{Split decomposition}

Two bipartitions $\{X_1,X_2\}$ and $\{Y_1,Y_2\}$ of a set $V$
\emph{overlap} if $X_i\cap Y_j\ne \emptyset$ for all $i,j\in \{1,2\}$.
A \emph{split} in a connected graph $G$ is a bipartition $\{X,Y\}$ of
the vertex set $V(G)$ such that $|X|,|Y|\geq 2$ and
$\rk\left(A_G[X,Y]\right) = 1$. A split $\{X,Y\}$ is \emph{strong} if
there is no other split $\{X',Y'\}$ such that $\{X,Y\}$ and
$\{X',Y'\}$ overlap. Figure \ref{fig:3.1.1} shows a schematic view of
splits.  Notice that not all graphs have a split and those without a
split are called \emph{prime}. We follow \cite{Courcelle06} for the
definition of a \emph{split decomposition tree}.

\begin{figure}
	\begin{center}
	  {\includegraphics{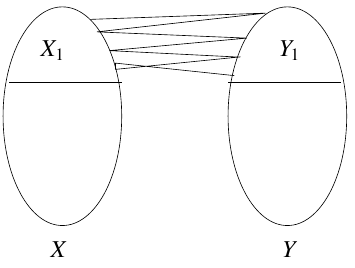}}
	\caption{Schematic view of splits.}
	\label{fig:3.1.1}
	\end{center}
\end{figure}

If $\{X,Y\}$ is a split, then we let $G^X$ and $G^Y$ be the graphs
with vertex set $X\cup \{h_X\}$ and $Y\cup \{h_Y\}$ respectively where the
vertices $h_X$ and $h_Y$ are new and called \emph{neighbour
  markers} of $G^X$ and $G^Y$ respectively, and with edge set 
\begin{align*}
E(G^X) &:=E(G[X])\cup \{xh_X\mid x\in
X\ \textrm{and}\ N_G(x)\cap Y \ne \emptyset\},\qquad \textrm{and}\\ 
E(G^Y)&:=E(G[Y])\cup
\{yh_Y\mid y\in Y\  \textrm{and}\ N_G(y) \cap X\ne \emptyset\}.
\end{align*}

A \emph{decomposition} of a connected graph $G$ is defined inductively
as follows: $\{G\}$ is the only decomposition of size $1$. If
$\{G_1,\ldots, G_n\}$ is a decomposition of size $n$ of $G$, then if
$G_i$ has a split $\{X,Y\}$, then $\{G_1,\ldots, G_{i-1},
G^X, G^Y, G_{i+1}, \ldots, G_n\}$ is a
decomposition of size $n+1$. Notice that the decomposition process
must terminate because the new graphs $G^X$ and
$G^Y$ are smaller than $G_i$. The graphs $G_i$ of a
decomposition are called \emph{blocks}. If two blocks have neighbour
markers, we call them \emph{neighbour blocks}. 

For every decomposition $\cD$ of a
connected graph $G$ we associate the graph $S(\cD)$ with vertex set
$\bigcup_{G_i\in \cD} V(G_i)$ and edge set 
\begin{align*}
  &  \left(\bigcup_{G_i\in \cD} E(G_i)\right) \cup \{h_Xh_Y\mid
  h_X,h_Y\ \textrm{are neighbour markers}\}.
\end{align*}

Edges in $E(S(\cD))$ between neighbour markers of $\cD$ are called
\emph{marked edges} and the others are called \emph{solid edges}. One
notices that subgraphs of $S(\cD)$ induced by solid edges are blocks
of $\cD$. Observe that each mark edge is an isthmus, and
the marked edges form a matching. 

Two decompositions $\cD_1$ and $\cD_2$ of a connected graph $G$ are
\emph{isomorphic} if there exists a graph isomorphism $f$ between
$S(\cD_1)$ and $S(\cD_2)$ which preserves the marked edges, and such
that $f(x)=x$ for all $x\in V(G)$. It is worth noticing that a graph
can have several non isomorphic decompositions. However, a
\emph{canonical decomposition} can be defined. A decomposition is
\emph{canonical} if and only if: (i) each block is either prime
(called \emph{prime block}), or is isomorphic to a clique of size at
least $3$ (called \emph{clique block}) or to a star of size at least
$3$ (called \emph{star block}), (ii) no two clique blocks are
neighbour, and (iii) if two star blocks are neighbour, then either
their markers are both centres or both not centres.  The following
theorem is due to Cunningham and Edmonds \cite{CunninghamE80}, and
Dahlhaus \cite{Dahlhaus00}.

\begin{thm}[\cite{CunninghamE80,Dahlhaus00}] \label{thm:CED} Every
  connected graph has a unique canonical decomposition, up
  to isomorphism. It can be obtained by iterated splitting relative to
  strong splits. This canonical decomposition can be computed in time
  $O(n+m)$ for every graph $G$ with $n$ vertices and $m$ edges.
\end{thm}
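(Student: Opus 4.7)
The plan is to establish the three assertions of the theorem in turn, relying on the central combinatorial fact that the family of strong splits of a connected graph is \emph{laminar}: if two splits $\{X,Y\}$ and $\{X',Y'\}$ overlap then neither can be strong, by definition. Consequently, iterated splitting along strong splits builds a tree whose internal nodes encode the successive bipartitions, and this tree will serve as the skeleton of the canonical decomposition.

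For \textbf{existence}, I would iteratively split any block of the current decomposition along a strong split, stopping when no block admits one. Laminarity ensures the process is confluent: a strong split of $G$ distinct from the one being used either acts entirely inside one of the two new pieces, or is preserved until later. Upon termination, each block $H$ has no strong split. If $H$ has no split at all, it is prime by definition; otherwise, every split of $H$ overlaps some other split, and a classification argument shows that this forces $H$ to be either a clique of size at least $3$ or a star of size at least $3$. Proving this dichotomy is the main structural obstacle: one takes a minimum-size pair of overlapping splits of $H$ and exploits the rank-$1$ constraint across each cut to force a uniform adjacency pattern, either complete (clique) or of rank-$1$ bipartite form organised around a single centre (star). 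Condition (i) then holds. Conditions (ii) and (iii) are enforced by a cleanup step: two neighbouring clique blocks joined by a marked edge can always be merged into one larger clique without losing information, and an analogous merge is possible for neighbouring star blocks whose markers are both centres or both leaves.

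For \textbf{uniqueness}, the set of strong splits of $G$ is an intrinsic invariant of $G$ independent of the order in which splits are performed, so the underlying tree produced by the procedure above depends only on $G$. The prime/clique/star labels of the blocks are forced by the structural dichotomy above, and conditions (ii)--(iii) remove the remaining ambiguity arising from the fact that a chain of clique (resp.\ star) blocks joined by marked edges admits several equivalent regroupings. For the \textbf{$O(n+m)$ time bound}, I would invoke Dahlhaus's algorithm \cite{Dahlhaus00}, whose proof combines partition refinement with a reduction to modular decomposition on an auxiliary bipartite structure; a direct re-derivation is lengthy and would be the hardest algorithmic part of the proof, so I would simply cite it rather than re-derive it.
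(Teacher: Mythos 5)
The first thing to say is that the paper does not prove this statement at all: Theorem~\ref{thm:CED} is imported as a black box from \cite{CunninghamE80} and \cite{Dahlhaus00}, so there is no in-paper argument to compare yours against. Judged on its own, your outline follows the standard Cunningham--Edmonds line (laminarity of strong splits $\Rightarrow$ a decomposition tree; classification of the terminal blocks; Dahlhaus for the running time), and citing \cite{Dahlhaus00} for the $O(n+m)$ bound is exactly what the paper does and is entirely appropriate.

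However, as a proof the sketch has a genuine gap precisely where you flag ``the main structural obstacle'': the claim that a graph in which every split overlaps another split must be a clique or a star of size at least $3$ is the heart of the theorem, and ``take a minimum-size pair of overlapping splits and exploit the rank-$1$ constraint'' is not an argument --- it is a restatement of the goal. Without that classification, neither existence of a decomposition satisfying condition (i) nor uniqueness follows. A second, quieter gap is the confluence step: after splitting $G$ along a strong split, you need a lemma relating splits of the blocks $G^X$, $G^Y$ (which contain marker vertices) to splits of $G$, in order to know that the remaining strong splits of $G$ survive as (strong) splits of the blocks and that no new ones appear; ``laminarity ensures the process is confluent'' asserts this rather than proving it, and uniqueness rests on the same unproved correspondence. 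Finally, your ``cleanup step'' merging neighbouring clique blocks sits uneasily with the statement being proved: the theorem asserts that iterated splitting along \emph{strong} splits directly yields the canonical decomposition, with conditions (ii) and (iii) holding automatically (e.g.\ $K_5$ has no strong split at all, so it is never cut into two neighbouring clique blocks in the first place); needing a merge pass suggests you are implicitly splitting along non-strong splits somewhere. None of this makes the approach wrong --- it is the right skeleton --- but the two lemmas above would have to be supplied for it to be a proof rather than a citation in disguise.
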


The canonical decomposition of a connected graph $G$ constructed in
Theorem \ref{thm:CED} is called \emph{split decomposition} and we will
denote it by $\cD_G$.  Since marked edges of $\cD_G$ are isthmus and
form a matching, if we contract the solid edges in $S(\cD_G)$, we
obtain a tree called \emph{split decomposition tree} of $G$ and
denoted by $T_G$. For every node $u$ of $T_G$, we denote by $b_G(u)$
the block of $\cD_G$ the edges of which are contracted to get $u$, and
we let $V(u)$ be $V\left(b_G(u)\right)\cap V(G)$.  For an edge $uv$ of
$T_G$, we denote by $G^{uv}$ the subgraph of $G$ induced by
$\bigcup\limits_{w\in T_G^{uv}} V(u)$ where $T_G^{uv}$ is the subtree
of $T_G$ induced by those nodes $w$ of $T_G$ such that a path from $u$
to $w$ does not contain $v$. Notice that for every edge $uv$ of $T_G$,
$\{V(G^{uv}),V(G)\setminus{V(G^{uv})}\}$ is a strong split of $G$. We
finish these preliminaries with the following characterisation of
distance hereditary graphs. 



\begin{thm}[\cite{CunninghamE80}]\label{thm:CE80} A connected graph $G$ is
  distance hereditary if and only if for every node $u$ of $T_G$ and
  for every $\emptyset \subsetneq W \subsetneq N_{T_G}(u)$, the
  bipartition $\{X, V(G)\setminus {X}\}$ with
  $X:=\left(\bigcup\limits_{v\in W} V(G^{vu})\right) \cup X'$ such
  that $X'\subseteq V(u)$ is a split in $G$, provided that
  $|X|,|V(G)\setminus {X}| \geq 2$.
\end{thm}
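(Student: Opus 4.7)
The plan is to reduce Theorem~\ref{thm:CE80} to the classical characterisation of distance hereditary graphs as those whose split decomposition $\cD_G$ is totally decomposable, meaning every block is either a clique block or a star block (Cunningham--Edmonds~\cite{CunninghamE80}; see also the survey~\cite{KanteR09} cited in the preliminaries). The bridge between a bipartition of $V(G)$ and a bipartition of a single block $b_G(u)$ is the following rank-preservation lemma that I would establish first: if $\{X_u,Y_u\}$ is a bipartition of $V(b_G(u))$ and $\{X,V(G)\setminus X\}$ is obtained by replacing every marker $h_v$ of $b_G(u)$ on the $X_u$-side by the full set $V(G^{vu})$ (and similarly on the $Y_u$-side), then
\[
  \rk\bigl(A_G[X,V(G)\setminus X]\bigr) \;=\; \rk\bigl(A_{b_G(u)}[X_u,Y_u]\bigr).
\]
This would follow by induction on $|N_{T_G}(u)|$: each marker edge of $S(\cD_G)$ is itself a rank-$1$ split of $G$, so absorbing one subtree at a time into its corresponding marker row or column leaves the rank of the cut unchanged.

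Given this lemma, the forward direction is immediate. Assume $G$ is distance hereditary, fix $u$, $W$ and $X'$ as in the statement, and set $X_u := X' \cup \{h_v \mid v \in W\}$ and $Y_u := V(b_G(u)) \setminus X_u$; since $W$ is a proper non-empty subset of $N_{T_G}(u)$, both $X_u$ and $Y_u$ are non-empty. By total decomposability, $b_G(u)$ is a clique or a star, and in either case $\rk(A_{b_G(u)}[X_u,Y_u]) \leq 1$: in a clique the matrix is all-ones, while in a star at most the row or column of the centre is non-zero. The lemma then yields $\rk(A_G[X,V(G)\setminus X]) \leq 1$, and combined with the hypothesis $|X|, |V(G)\setminus X| \geq 2$ this gives a split of $G$.

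For the converse I would argue by contrapositive. If $G$ is not distance hereditary, then some block $b_G(u)$ of $\cD_G$ is prime. Assume first that $|N_{T_G}(u)| \geq 2$; since the smallest prime graph is $C_5$, the block $b_G(u)$ has at least five vertices, so I can pick $W \subsetneq N_{T_G}(u)$ non-empty and $X' \subseteq V(u)$ making the corresponding bipartition $\{X_u,Y_u\}$ of $V(b_G(u))$ have both sides of size at least $2$. Primality then forces $\rk(A_{b_G(u)}[X_u,Y_u]) \geq 2$, and the lemma transports this to $\rk(A_G[X,V(G)\setminus X]) \geq 2$, contradicting the hypothesis. The main obstacle will be the boundary case in which the only prime block sits at a leaf $u$ of $T_G$, since the statement then directly constrains no bipartition of that block; I would resolve it by applying the hypothesis at the unique neighbour $v$ of $u$ in $T_G$ and combining it with a direct analysis of the bipartitions of $b_G(u)$ viewed through the marker $h_v$, in order to still exhibit a bipartition of $V(G)$ of rank at least $2$ that violates the assumption.
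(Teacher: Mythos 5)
The paper never proves Theorem~\ref{thm:CE80}: it is imported from Cunningham and Edmonds \cite{CunninghamE80}, and only its forward direction is ever used (namely in the proof of Proposition~\ref{prop:sdtlrw1}). So your proposal has to be judged on its own terms. Your forward direction is sound: the rank-preservation lemma across marker edges is correct (each expanded marker contributes rows that are either zero or copies of a single pattern, namely the expansion of that marker's row in $b_G(u)$, so the cut matrix of $\{X,V(G)\setminus X\}$ in $G$ and the cut matrix of $\{X_u,Y_u\}$ in $b_G(u)$ have equal rank), and combined with total decomposability of distance-hereditary graphs it yields exactly the stated conclusion, with connectedness guaranteeing the rank is exactly $1$.

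The converse, however, has a genuine gap that your ``boundary case'' remark does not repair. When the only prime block sits at a leaf $u$ of $T_G$, or when $T_G$ is a single node, the hypothesis of the theorem is vacuous or nearly so: there is no $\emptyset\subsetneq W\subsetneq N_{T_G}(u)$ at $u$, and applying the hypothesis at the unique neighbour $v$ of $u$ only produces sets $X$ containing $V(G^{uv})$ as a whole, so it never separates two vertices lying inside the prime block, which is what you would need to exhibit a cut of rank at least $2$. Concretely, $C_5$ itself (prime, so $T_G$ is one node) and $C_5$ with a pendant vertex (whose canonical decomposition has exactly two blocks, a star and a prime $C_5$, so both nodes of $T_G$ are leaves) satisfy the displayed condition vacuously yet are not distance hereditary. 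Hence the contrapositive argument cannot be completed as sketched; either the converse must be restated (for instance in terms of every block of $\cD_G$ being a clique or a star), or one should content oneself with the forward implication, which is all the paper relies on.
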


\begin{figure}
	\begin{center}
		\includegraphics[scale=1.0]{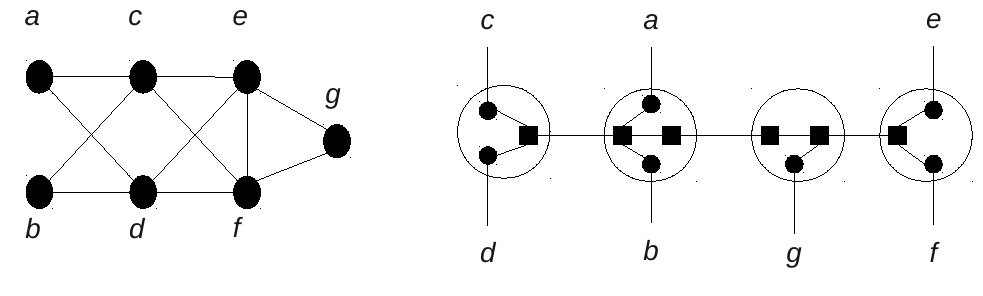}
	\caption{An example of a graph and its split decomposition tree.}
	\label{fig:ex-split}
	\end{center}
\end{figure}

\subsection{Charaterizing graphs of linear rank-width $1$} It is
folklore to verify that the rank-width of a graph is smaller (or
equal) than its linear rank-width. Hence, a connected graph of linear
rank-width $1$ has necessarily rank-width $1$. It is proved in
\cite{Oum05a} that a connected graph has rank-width $1$ if and only if
it is a \emph{distance hereditary} graph.  Therefore, Theorem
\ref{thm:lrw1} follows from Propositions \ref{prop:sdtlrw1} and
\ref{prop:lrw1sdt} below.

\begin{prop}
\label{prop:sdtlrw1}
 Let $G$ be a connected distance hereditary graph such that $T_G$ is a
 path. Then $\lrw{G}=1$.
\end{prop}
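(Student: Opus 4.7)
The plan is to exhibit a linear ordering of $V(G)$ whose cutrank is at most $1$. Write $T_G = u_1 u_2 \cdots u_k$, set $L_j := \bigcup_{\ell=1}^{j} V(u_\ell)$ (with $L_0 := \emptyset$), let $\sigma_i$ be any linear ordering of $V(u_i)$, and use the concatenation $\sigma := \sigma_1 \sigma_2 \cdots \sigma_k$. Every prefix of $\sigma$ is uniquely of the form $X = L_{i-1} \cup X'$ with $X' \subseteq V(u_i)$, and I would verify $\rk(A_G[X, V(G)\setminus X]) \le 1$ by splitting into three cases.

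If $X' \in \{\emptyset, V(u_i)\}$ (the prefix ends at a block boundary), then $\{X, V(G)\setminus X\}$ coincides with the strong split of $G$ associated with an edge of $T_G$, so it has rank $1$ by the definition of a split. If $\emptyset \subsetneq X' \subsetneq V(u_i)$ and $u_i$ is an internal node of $T_G$ (that is, $2 \le i \le k-1$), I apply Theorem \ref{thm:CE80} with $W := \{u_{i-1}\}$, a proper nonempty subset of $N_{T_G}(u_i) = \{u_{i-1}, u_{i+1}\}$; the theorem then yields that $\{X, V(G)\setminus X\}$ is again a split of $G$, hence of rank $1$.

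The main obstacle is the remaining case: a mid-block cut inside an endpoint block of $T_G$, say $i = 1$ with $\emptyset \subsetneq X' \subsetneq V(u_1)$ (the case $i = k$ is symmetric), since Theorem \ref{thm:CE80} does not apply at an endpoint of $T_G$. If $k \ge 2$, the split at the edge $u_1 u_2$ forces every column of $A_G[X', R_1]$, where $R_1 := V(G) \setminus L_1$, to be either the zero vector or the indicator vector of $\beta_1 \cap X'$, where $\beta_1 \subseteq V(u_1)$ is the neighbourhood in $b_G(u_1)$ of the unique marker $h_1^R$; it follows that the rank of $A_G[X', (V(u_1) \setminus X') \cup R_1]$ equals the rank of the cut submatrix $A_{b_G(u_1)}[X', (V(u_1) \setminus X') \cup \{h_1^R\}]$ entirely inside the block (for $k = 1$ this is trivial, since $R_1 = \emptyset$ and no marker is involved). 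Since $G$ is distance hereditary, $b_G(u_1)$ is a clique or a star, and any bipartition of a clique or of a star into two nonempty parts has rank-$1$ cut submatrix (all-ones for a clique; only the row or column of the centre is nonzero for a star). Combining the three cases yields $\ucutrk_G(\sigma) \le 1$, whence $\lrw{G} \le 1$; since $G$ has at least one edge the reverse inequality is trivial, so $\lrw{G} = 1$.
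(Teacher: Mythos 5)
Your proposal is correct and follows essentially the same route as the paper: concatenate arbitrary orderings of the sets $V(u_i)$ along the path $T_G$ and invoke Theorem \ref{thm:CE80} to see that each prefix cut is a split. Your separate treatment of cuts falling strictly inside an end block is a genuine (and correct) refinement rather than a detour: since Theorem \ref{thm:CE80} requires a nonempty \emph{proper} subset $W$ of $N_{T_G}(u)$, it does not literally apply when $u$ is a leaf of $T_G$, a case the paper's proof passes over silently, and your direct rank computation inside the clique or star block $b_G(u_1)$ closes exactly that gap.
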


\begin{proof} We will show how to turn $T_G$ into a linear ordering of
  $V(G)$ of cutrank $1$.  Let us enumerate the nodes of $T_G$ as $u_1,
  \ldots, u_p$ from left to right. For every $1\leq i \leq p$, let
  $\pi_i$ be any linear ordering of $V(u_i)$, and let $\pi:=\pi_1\cdot
  \cdots \cdot \pi_p$ be the concatenation of the orderings
  $\pi_1,\ldots, \pi_p$.  Since $\{V(u_1), \ldots, V(u_p)\}$ is a
  partition of $V(G)$, $\pi$ is clearly a linear ordering of
  $V(G)$. We claim that its cutrank is $1$. Indeed let $i$ be an index
  of this ordering and let $X_i:=\{x_1,\ldots,x_i\}$. Assume without
  loss of generality that $|X_i|,|\overline{X_i}|\geq 2$, otherwise
  we have trivially $\rk\left(A_G[X_i,\overline{X_i}]\right)=1$. Since $T_G$
  is a path, then $X_i$ is equal to $\bigcup_{1\leq l \leq j-1} V(u_l) \cup
  X'$ with $X'\subseteq V(u_{j})$ for some $1\leq j < p$. By Theorem
  \ref{thm:CE80}, $\{X_i,\overline{X_i}\}$ is a split in $G$, and
  hence $\rk\left(A_G[X_i,\overline{X_i}]\right)=1$.
\end{proof}

The next proposition gives the converse direction of Proposition
\ref{prop:sdtlrw1}. 

\begin{prop}
\label{prop:lrw1sdt}
 Let $G$ be a connected graph of linear rank-width $1$. Then $G$ is distance
 hereditary and $T_G$ is a path.
\end{prop}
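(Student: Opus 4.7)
The plan is to treat the two conclusions of the proposition separately. For distance-hereditariness, the argument is immediate: the excerpt already notes that rank-width is bounded above by linear rank-width, so $\lrw{G}=1$ forces $\rwd{G}\le 1$, and the characterisation in \cite{Oum05a} identifies connected graphs of rank-width at most $1$ with distance hereditary graphs (the single-vertex case being trivial).

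For the claim that $T_G$ is a path I would argue by contrapositive. Assume $T_G$ is not a path: some node $u$ of $T_G$ then has degree at least three; pick three distinct neighbours $v_1,v_2,v_3$ and set $A_i:=V(G^{v_iu})$ for $i\in\{1,2,3\}$. These three sets are pairwise disjoint, non-empty, and each $\{A_i,V(G)\setminus A_i\}$ is a rank-one strong split of $G$. The rank-one structure gives, for every $i$, a reach set $R_i\subseteq A_i$ and a common external neighbourhood $N^*_i\subseteq V(G)\setminus A_i$ such that every edge between $A_i$ and its complement lies in $R_i\times N^*_i$. A short look at the block $b_G(u)$---which is a clique or star block since $G$ is distance hereditary---pins down, via the adjacency of the markers $h_1,h_2,h_3$ of $u$ inside $b_G(u)$, which $N^*_i\cap A_j$ are non-empty and exactly which vertices of $A_j$ they cover.

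The objective is then to show that every linear ordering $x_1,\dots,x_n$ of $V(G)$ contains a cut of rank at least $2$, contradicting $\lrw{G}=1$. Given such an ordering I set $q_i:=\max\{k:x_k\in A_i\}$ and, after relabelling, assume $q_1<q_2<q_3$. I would examine a cut at, or slightly before, position $q_2$: the prefix contains all of $A_1$ and at least part of $A_2$, while the suffix contains $x_{q_3}\in A_3$. In $A_G[X_k,\overline{X_k}]$ I plan to exhibit two linearly independent rows---one from a reach vertex of $A_1$ whose right-hand neighbours include a reach vertex of $A_3$, and a second from $A_2$, either a vertex of $R_2$ whose external pattern differs provably from that of $R_1$, or, if the cut splits $A_2$, a vertex of $A_2\cap X_k$ witnessing an edge or independent-twin non-edge internal to $A_2$ that the first row does not capture.

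The step I expect to be the main obstacle is checking that the two chosen rows genuinely differ across every configuration of $b_G(u)$ and every distribution of the remaining vertices $R:=V(G)\setminus(A_1\cup A_2\cup A_3)$. A short case analysis on $b_G(u)$ should handle this: the clique case is easiest, since the markers are pairwise adjacent and the three reaches $R_1,R_2,R_3$ have pairwise distinct external neighbourhoods; the star cases---centre a real vertex of $V(u)$, a marker $h_i$, or a marker of a further neighbour of $u$---each require a slightly different witness, occasionally shifting the cut into the interior of a suitable $A_i$ so that an internal edge or independent-twin split contributes the second row, and invoking Theorem~\ref{thm:CE80} to control which bipartitions can be splits in $G$. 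Putting the cases together will give the desired rank-two cut in every ordering, completing the proof.
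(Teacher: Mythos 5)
Your first paragraph (distance-hereditariness via rank-width $\le$ linear rank-width and the characterisation of rank-width~$1$ from \cite{Oum05a}) is exactly the paper's argument and is fine. The second part, however, is a plan rather than a proof, and the hole sits precisely at the step you yourself flag as ``the main obstacle''. Your primary witness for a rank-$2$ cut --- one row from a reach vertex of $A_1$ and one from a reach vertex of $A_2$ in the cut at position $q_2$ --- fails outright when $b_G(u)$ is a star centred at the marker pointing towards $v_3$: there the markers towards $v_1$ and $v_2$ are leaves adjacent only to that centre, so $N^*_1=N^*_2$ equals the reach of $A_3$, vertices of $R_1$ have no neighbours in $A_2$ (and vice versa), and the two rows you propose are \emph{equal} in $A_G[X_{q_2},\overline{X_{q_2}}]$. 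The fallback you gesture at (shifting the cut into the interior of some $A_i$ to pick up an internal edge or a twin split) is exactly where the real work lies and is not carried out; it is also delicate because the vertices of $V(u)$ and of $V(G)\setminus(A_1\cup A_2\cup A_3)$ may be interleaved arbitrarily in the ordering. Until that case analysis is written down and checked for every block type and every ordering, the proposition is not proved.

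The paper sidesteps all of this with a purely combinatorial argument that you could adopt. Let $\pi=x_1,\ldots,x_n$ have cutrank $1$ and suppose some strong split $\{X,Y\}$ is not of the form $\{X_i,\overline{X_i}\}$; with $x_1\in X$ and $j$ the least index such that $x_j\notin X$, one checks that $X\cap\overline{X_j}\neq\emptyset$ and $Y\cap\overline{X_j}\neq\emptyset$, so $\{X,Y\}$ overlaps the split $\{X_j,\overline{X_j}\}$, contradicting strongness. Hence \emph{every} strong split of $G$ is a prefix cut of $\pi$. This already kills your degree-$3$ node: the three strong splits $\{A_i,V(G)\setminus A_i\}$ have pairwise disjoint sides $A_1,A_2,A_3$, and three pairwise disjoint nonempty sets cannot all be realised as a prefix $X_i$ or a suffix $\overline{X_i}$ of one ordering (any two prefixes share $x_1$, any two suffixes share $x_n$). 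No rank computation, no analysis of $b_G(u)$, no reach sets. If you want to keep your contrapositive framing, replace the search for two independent rows by this overlap argument; otherwise the write-up as it stands has a genuine gap.
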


\begin{proof} Let $G$ be a graph with $\lrw{G}=1$. Hence, $G$ is
  distance hereditary (see the paragraph before Proposition
  \ref{prop:sdtlrw1}). Let $\pi:=x_1,\ldots,x_n$ be a linear ordering
  of $V(G)$ of cutrank $1$. It suffices to prove that every strong
  split of $G$ is of the form $\{X_i,\overline{X_i}\}$ for some index
  $1< i < n$. Suppose this is not the case and let $\{X,Y\}$ be a
  strong split of $G$ with $\{X,Y\} \ne \{X_i,\overline{X_i}\}$ for
  every $1<i<n$. Without loss of generality we can assume that $x_1\in
  X$ and let $j$ be the smallest index such that $x_j\notin X$. First
  of all notice that $2\leq j \leq n-1$ otherwise $|Y|\leq 1$ and then
  $\{X,Y\}$ would not be a split. Therefore, $\{X_j,\overline{X_j}\}$
  is a split because $\pi$ is a linear ordering of $V(G)$ of cutrank
  $1$. We have that
  \begin{itemize}
  \item $x_1\in X\cap X_j$ and $x_j\in Y\cap X_j$,
  \item $X\cap \overline{X_j} \ne \emptyset$ otherwise $X$ would equal
    $X_{j-1}$,
  \item $Y\cap \overline{X_j}\ne \emptyset$ because otherwise
    $|Y|=1$. 
  \end{itemize} 
 Therefore, $\{X,Y\}$ and $\{X_j,\overline{X_j}\}$ overlap, which
 contradicts the fact that $\{X,Y\}$ is a strong split. 
\end{proof}

Figure \ref{fig:lrw2} shows a graph of linear rank-width exactly $2$. 

\begin{figure}[h!]
	\begin{center}
		\includegraphics[scale=1.0]{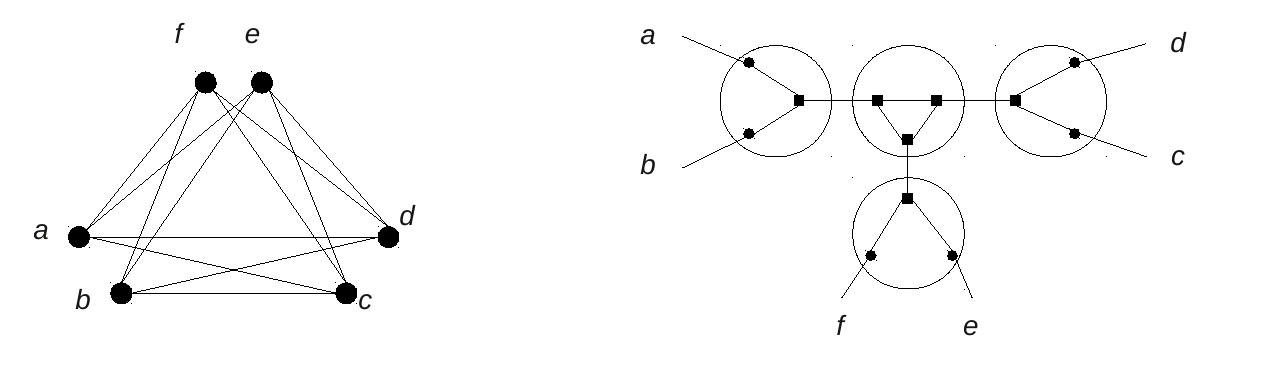}
	\caption{A graph of linear rank-width 2}
	\label{fig:lrw2}
	\end{center}
\end{figure}

\subsection{Structure and obstructions} We will now discuss about
some consequences of Theorem \ref{thm:lrw1}, particularly the
structure of graphs of linear rank-width $1$. 

Let $G$ be a graph and let $xy$ be an edge of $G$. The pivoting of $G$
on $xy$ is the graph $G*x*y*x=G*y*x*y$ \cite{Bouchet88,Oum05}. A graph
$H$ is a \emph{pivot-minor} of a graph $G$ if $H$ can be obtained from
$G$ by a sequence of pivotings and deletions of vertices. It is clear
that a pivot-minor is also a vertex-minor. A graph $H$ is a
vertex-minor (or pivot-minor or induced subgraph) obstruction for
linear rank-width $1$ if $H$ has linear rank-width $2$ and every
proper vertex-minor (or pivot-minor or induced subgraph) of $H$ has
linear rank-width $1$. In \cite{AdlerFP13} the authors gave the
induced subgraph (or vertex-minor or pivot-minor) obstructions for
linear rank-width $1$. We will explain how to obtain all these
obstructions in a more direct way from Theorem \ref{thm:lrw1}.  The
proof of the following is straightforward from Theorem \ref{thm:lrw1}.

\begin{prop}\label{prop:obstructions} A graph $G$ is a distance
  hereditary graph obstruction for linear rank-width $1$ if and only
  if its split decomposition tree is a star with three leaves and
  every connected component of every proper induced subgraph of $G$
  has a path as split decomposition tree.
\end{prop}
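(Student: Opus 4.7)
Both directions use Theorem~\ref{thm:lrw1} together with the folklore inequality $\lrw{G}\le\lrw{G-v}+1$, which follows by prepending $v$ to an optimal linear ordering of $V(G)\setminus\{v\}$: adding one row can raise the rank of each cut-matrix by at most one.

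\emph{Backward direction.} Assume $T_G$ is a star with three leaves and that every component of every proper induced subgraph of $G$ is distance hereditary with a path as split decomposition tree. Theorem~\ref{thm:lrw1} gives $\lrw{G-v}\le 1$ for every vertex $v$, while ``$T_G$ is not a path'' gives $\lrw{G}\ge 2$. The folklore inequality then pins $\lrw{G}=2$, so $G$ is a distance hereditary obstruction.

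\emph{Forward direction.} Assume $G$ is a distance hereditary obstruction. By Theorem~\ref{thm:lrw1}, $T_G$ is not a path and, for every $v\in V(G)$, each component of $G-v$ has a path as split decomposition tree. The plan is to combine a structural claim about single vertex deletion with a purely combinatorial observation on trees.

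\emph{Claim A.} For every leaf $u$ of $T_G$ there is a non-cut vertex $v\in V(u)$ such that either $T_{G-v}\cong T_G$ (when $|V(u)|\ge 3$) or $T_{G-v}\cong T_G-u$ (when $|V(u)|=2$). If $|V(u)|\ge 3$ then $b_G(u)$ is a clique or star block of size at least~$4$, and one may pick a non-central real vertex whose deletion leaves $b_G(u)$ a valid canonical block of size $\ge 3$, leaving the tree intact. If $|V(u)|=2$ then $b_G(u)$ has size~$3$, and deleting one of its two real vertices reduces the block to a single marked edge that is absorbed into the unique neighbouring block by the standard Cunningham--Edmonds re-canonicalization step, yielding $T_{G-v}\cong T_G-u$.

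\emph{Claim B.} The only non-path tree $T$ on at least four nodes such that $T-\ell$ is a path for every leaf $\ell$ of $T$ is the star with three leaves. Indeed, if $T$ had a node of degree $\ge 4$ then removing any leaf would leave a branching of degree $\ge 3$; if $T$ had two distinct branching nodes then any single leaf deletion leaves at least one of them still branching; and if $T$ were a spider with three legs of lengths $\ell_1\le\ell_2\le\ell_3$ with $\ell_3\ge 2$, then deleting the extremal leaf of the longest leg keeps the centre of degree~$3$.

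Combining the two claims with the obstruction hypothesis: because $T_G$ is not a path, Claim~A forces $|V(u)|=2$ for every leaf $u$ of $T_G$ (else $T_{G-v}\cong T_G$ would itself be non-path, contradicting the obstruction hypothesis via Theorem~\ref{thm:lrw1}); and then $T_G-u$ must be a path for every leaf $u$, so Claim~B gives that $T_G$ is the star with three leaves.

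The main obstacle is Claim~A in the sub-case $|V(u)|=2$: one must verify that collapsing the size-$3$ leaf block into its unique neighbour block does not accidentally create a new branching vertex in $T_{G-v}$. This reduces to a short case analysis over the types of $b_G(u)$ and its neighbour (clique versus star, and for star blocks whether the marker sits in the centre or in a leaf position), using only the canonical-form conditions~(i)--(iii) of the split decomposition; the rest of the argument is routine.
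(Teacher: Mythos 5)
Your argument is correct, but it is worth noting that the paper does not actually prove this proposition: it is dismissed with ``the proof is straightforward from Theorem~\ref{thm:lrw1}.'' What you have written is therefore a genuine filling-in rather than a parallel route. The backward direction and the ``every component of every proper induced subgraph has a path as split decomposition tree'' half of the forward direction really are immediate from Theorem~\ref{thm:lrw1} (together with your folklore inequality $\lrw{G}\le\lrw{G-v}+1$, which is needed to pin $\lrw{G}=2$ and which the paper never states). The part that is \emph{not} immediate is that $T_G$ must be a star with exactly three leaves, and your Claim~A is precisely the missing ingredient: one must know how the canonical decomposition of $G-v$ relates to that of $G$ for a well-chosen $v$ in a leaf block. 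This is the same kind of reasoning the paper later uses informally in the proof of Theorem~\ref{thm:lrw1-recognition} (``one checks easily that \ldots is minimal with respect to that property''), so your proof makes explicit what the authors leave implicit. Your Claim~B and the final assembly are clean and correct.

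Two small points. First, the sub-case of Claim~A with $\card{V(u)}=2$, which you flag as the main obstacle, does go through as you suspect: since the split separating $V(u)$ from the rest has rank $1$, the surviving real vertex $a$ has exactly the across-neighbourhood represented by the marker, so it can replace the marker in the neighbouring block verbatim; the resulting decomposition is canonical because conditions (ii)--(iii) only constrain \emph{neighbouring} pairs of blocks and deleting a leaf block creates no new such pairs, and uniqueness (Theorem~\ref{thm:CED}) then identifies it with $\cD_{G-v}$. It would strengthen the write-up to say this one sentence rather than defer it. Second, in the backward direction you silently assume $G$ itself is distance hereditary (and you strengthen the stated hypothesis by assuming the components of proper induced subgraphs are distance hereditary); this matches the intended scope of the proposition, which is a classification \emph{within} distance hereditary graphs, but you should say explicitly that either $G$ is taken to be distance hereditary throughout or that the hypotheses force it (a minimal non-distance-hereditary induced subgraph is prime, so its split decomposition tree is a single node and violates one of the two stated conditions).
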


From Theorem \ref{thm:lrw1}, we can also deduce that if a connected
graph has linear rank-width $1$, then for each internal node $u$ of
$T_G$, $V(u) \ne \emptyset$ (recall that $V(u)$ is the set of vertices
of $G$ in the block $b_G(u)$). Bouchet \cite{Bouchet88} characterises
exactly distance hereditary graphs such that each internal node of the
split decomposition tree contains at least one vertex.

\begin{thm}[\cite{Bouchet88}]\label{thm:lctree}
 A distance hereditary graph $G$ is locally equivalent to a tree if
 and only if, for every internal node $u$ of $T_G$, $V(u)\ne
 \emptyset$. This tree is moreover unique.
\end{thm}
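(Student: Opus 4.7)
My plan is to exploit the well-understood interaction between local complementation and the canonical split decomposition for distance hereditary graphs, and derive both directions plus uniqueness from it. The technical lemma I would set up first is that, for a distance hereditary graph $G$ and any vertex $v$ of $G$, the graphs $G$ and $G*v$ have the same split decomposition tree $T_G$ and the same partition $\{V(u)\}_{u\in V(T_G)}$ of the real vertices; local complementation only toggles the \emph{types} of certain blocks (clique $\leftrightarrow$ star) via the classical Bouchet rules. In particular, the condition ``every internal node of $T_G$ has $V(u)\ne\emptyset$'' depends solely on this partition and is therefore invariant under local equivalence.

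For the direction $(\Rightarrow)$ I plan to compute the split decomposition of a tree $T$ directly. Iteratively splitting $T$ along each edge whose two sides have at least two vertices yields a canonical decomposition in which every block is a star whose centre is an internal vertex of $T$; hence every internal node of $T_T$ carries $V(u)\ne\emptyset$, and by the invariance above this property transfers to every $G$ locally equivalent to $T$. For the direction $(\Leftarrow)$ I would proceed constructively. Under the hypothesis every block of $\cD_G$ contains at least one real vertex: leaf blocks automatically, since they have size $\ge 3$ and contain only a single marker, and internal blocks by assumption. I then apply local complementations, one block at a time and always at carefully chosen real vertices, turning each block into a star whose centre is a real vertex. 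Clique blocks pose no difficulty: a single local complementation at any real vertex of a clique block $b_G(u)$ converts it into a star centred at that vertex. Star blocks already centred at a real vertex need nothing, while star blocks centred at a marker are handled by invoking the Bouchet rules that transfer the centre across a marked edge using a real vertex of the neighbour block. Once every block has been normalised in this way, the split decomposition of the resulting graph matches exactly the one computed for a tree in the forward direction, so by Theorem~\ref{thm:CED} the resulting graph \emph{is} a tree.

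The main obstacle is organising the sequence of local complementations in the converse, because normalising one block may momentarily alter the type of a neighbour through the Bouchet propagation rules. I would resolve this by inducting on $T_G$ from the leaves inward, maintaining as invariant that already-processed blocks remain normalised. Uniqueness then comes for free: by Theorem~\ref{thm:CED} the canonical split decomposition of $G$ is unique, and by the invariance recalled above any tree locally equivalent to $G$ shares the same split decomposition tree and the same real-vertex partition as $G$; since a tree is reconstructed from this data unambiguously (each internal vertex being the centre of the corresponding star block), the tree locally equivalent to $G$ is unique up to isomorphism.
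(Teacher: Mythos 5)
The first thing to say is that the paper offers no proof of this statement at all: it is imported verbatim from Bouchet with the citation \cite{Bouchet88}, so there is no in-paper argument to compare against. Your sketch is, in architecture, a reconstruction of Bouchet's own proof: invariance of the canonical decomposition under local complementation, explicit computation of the decomposition of a tree for the forward direction, and a block-by-block normalisation for the converse. The forward direction and the uniqueness claim are essentially complete as you describe them: the canonical decomposition of a tree $T$ consists of one star block per internal vertex of $T$, centred at that vertex (the finer splits that would produce marker-centred stars are not strong, since they are overlapped by splits separating twin leaves), and the isomorphism type of the reconstructed tree depends only on $T_G$ and the sizes $|V(u)|$, not on the choice of centre inside each $V(u)$. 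One remark: the invariance you need for these two directions --- that $T_G$ and the partition $\{V(u)\}$ are preserved --- does not require the full "Bouchet rules"; it follows in one line from the observation that $A_{G*v}[X,Y]$ is obtained from $A_G[X,Y]$ by adding multiples of the (unchanged) row of $v$ to other rows, so every cut rank, hence every split and every strong split, is preserved.

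The genuine soft spot is the converse, precisely at the step you flag as the "main obstacle". The propagation rule you invoke informally --- that $G*v$ acts on each block $B$ as local complementation at the representative of $v$ in $B$ --- is \emph{false} without an accessibility condition. Take $G$ to be a triangle $xyz$ with a pendant vertex $w$ attached to $z$: its blocks are the clique $\{x,y,h\}$ and the star $\{h',z,w\}$ centred at $z$. Here $G*w=G$, yet local complementation at the representative $h$ of $w$ in the clique block would destroy the edge $xy$. The correct rule only applies the complementation in blocks $B$ for which $v$ is actually adjacent, through the composition, to the neighbours of its representative in $B$. Your leaves-to-root induction does survive this correction --- an already-normalised block is reached from any outside vertex through a marker that is a \emph{leaf} of its star, and local complementation at a star leaf is a no-op --- and the re-centring of a marker-centred star $B$ can be done by complementing at a real vertex on the far side of the marked edge that \emph{is} adjacent across the split (such a vertex exists because the cut matrix of a split is nonzero), which turns $B$ into a clique, followed by a complementation at a real vertex of $V(u)$. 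But as written, the proof delegates exactly the delicate point to an unstated "classical rule" that, in its naive form, is wrong; the argument is not complete until that rule is stated with its side condition and the stability of processed blocks is checked against it.
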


In the following we deduce some interesting characterisations of
graphs of linear rank-width $1$ we can deduce from Theorem
\ref{thm:lrw1} and some results in the literature. 

\begin{cor}
\label{cor:locCaterpillar} Let $G$ be a connected graph. The following
statements are equivalent. 
\begin{enumerate}
\item $G$ is distance hereditary and $T_G$ is a path.
\item $G$ has linear rank-width $1$.
\item $G$ is locally equivalent to a caterpillar.
\item $G$ is a vertex-minor of a path.
\item $G$ does not contain neither the co-$(3K_2)$ graph nor the Net
  graph nor the $5$-cycle as a vertex-minor. 
\end{enumerate}
\end{cor}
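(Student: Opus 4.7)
The plan is to use statement~(2) as a hub: (1)$\Leftrightarrow$(2) is immediate from Theorem~\ref{thm:lrw1}, and the remaining equivalences (2)$\Leftrightarrow$(3), (2)$\Leftrightarrow$(4), (2)$\Leftrightarrow$(5) are each handled either by combining Theorem~\ref{thm:lrw1} with Proposition~\ref{prop:2.1} or by invoking a result from the literature.

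For (2)$\Rightarrow$(3), starting from $\lrw{G}=1$, Theorem~\ref{thm:lrw1} yields that $G$ is distance hereditary and $T_G$ is a path. Since every block of the canonical split decomposition of a distance hereditary graph is either a star or a clique of size at least $3$, and since each internal node $u$ of the path $T_G$ carries exactly two marker vertices in $b_G(u)$, we obtain $V(u)\neq\emptyset$ for every internal $u$. Theorem~\ref{thm:lctree} then produces a tree $T$ locally equivalent to $G$, and Proposition~\ref{prop:2.1} gives $\lrw{T}=\lrw{G}=1$; applying Theorem~\ref{thm:lrw1} once more, $T_T$ is a path. The concluding step is the structural observation that, for a tree, $T_T$ is a path if and only if the tree is a caterpillar. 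Conversely (3)$\Rightarrow$(2) follows because a caterpillar admits a linear ordering of cutrank $1$ by interleaving each spine vertex with its pendant leaves, and Proposition~\ref{prop:2.1} preserves linear rank-width under local equivalence.

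For (2)$\Leftrightarrow$(4), the direction (4)$\Rightarrow$(2) is Proposition~\ref{prop:2.1} applied to the path (which has linear rank-width~$1$), and (2)$\Rightarrow$(4) is the theorem of Kwon and Oum \cite{KwonO13}. For (2)$\Leftrightarrow$(5), I would attribute the equivalence to \cite{AdlerFP13}; alternatively, one can derive the explicit obstruction list from Theorem~\ref{thm:lrw1} together with Proposition~\ref{prop:obstructions} by enumerating the connected distance hereditary graphs whose split decomposition tree is a star with three leaves and simplifying each representative up to local equivalence, which collapses the list to the co-$(3K_2)$ graph, the Net graph, and the $5$-cycle.

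The main obstacle is the small structural lemma implicit in (2)$\Rightarrow$(3): that for a tree $T$, $T_T$ is a path exactly when $T$ is a caterpillar. One needs to observe that the canonical decomposition of a tree consists only of star blocks (trees have no induced $K_3$, hence no clique blocks of size $\geq 3$, and they are distance hereditary, hence have no prime blocks), and then to argue that any branching among the internal (degree $\geq 2$) nodes of $T$ produces a node of degree $\geq 3$ in $T_T$, while a path structure among internal nodes forces $T_T$ to be a path.
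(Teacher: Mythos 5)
Your proposal is correct and follows essentially the same route as the paper: statement~(2) is used as the hub, (1)$\Leftrightarrow$(2) comes from Theorem~\ref{thm:lrw1}, (2)$\Leftrightarrow$(4) and (2)$\Leftrightarrow$(5) lean on \cite{KwonO13} and \cite{AdlerFP13} (or the case analysis behind Proposition~\ref{prop:obstructions}), and (2)$\Rightarrow$(3) goes through $V(u)\neq\emptyset$ at internal nodes and Theorem~\ref{thm:lctree}. The only real variation is the last step of (2)$\Rightarrow$(3): the paper invokes \cite[Theorem 4.4]{Bouchet88} to conclude that $T_G$ is also the split decomposition tree of the locally equivalent tree $T$, whereas you rederive that $T_T$ is a path from Proposition~\ref{prop:2.1} and a second application of Theorem~\ref{thm:lrw1}; both arguments then rest on the same (correctly sketched) lemma that a tree has a path as split decomposition tree exactly when it is a caterpillar.
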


\begin{proof} The equivalence between (1) and (2) is from Theorem
  \ref{thm:lrw1}. The equivalence between (2) and (4) is proved in
  \cite{KwonO13}, but can be easily proved from the equivalence
  between (2) and (3).  Indeed, connected vertex-minors of paths have
  linear rank-width $1$ and since every caterpillar is a vertex-minor
  of a path, we are done. \medskip

  The equivalence between (2) and (5) is proved in
  \cite{AdlerFP13}. Let us explain how to prove it in a more direct
  way from Proposition \ref{prop:obstructions}.  It is sufficient to
  construct the set of induced subgraph obstructions since the set of
  vertex-minor and pivot-minor obstructions can be derived from that
  set. The set of induced subgraph obstructions to being a distance
  hereditary graph is known for a while \cite{BandeltM86} and
  constitutes a subset of the induced subgraph obstruction for linear
  rank-width $1$. We now explain how to construct the set of distance
  hereditary induced subgraph obstructions for linear rank-width
  $1$. Let $H$ be a distance hereditary obstruction for linear
  rank-width $1$. From Proposition \ref{prop:obstructions} its split
  decomposition tree $T_H$ is a star with three leaves, say centred at
  $v$ with $v_1,v_2$ and $v_3$ as leaves. The following three cases
  describe exactly the canonical split decomposition of $H$.  \medskip

  \noindent \textbf{Case 1: $b_H(v)$ is a clique.} Then $b_H(v)$ has
  exactly three vertices and none of the
  $b_H(v_i)$s is a clique. Moreover, for each $1\leq i\leq 3$ the graph
  $b_H(v_i)$ has three vertices and is a star. \medskip

  \noindent \textbf{Case 2: $b_H(v)$ is a star centred at a marker
    vertex.} Then $b_H(v)$ has three vertices. Let us assume without
  loss of generality that the centre of $v$ is the neighbour marker of
  the marker vertex in $v_1$. Then $b_H(v_1)$ is either a clique or a
  star centred at a marker vertex, and for each $2\leq i \leq 3$ the
  graph $b_H(v_i)$ has three vertices and is either a clique or a star
  centred at a vertex of $H$. \medskip

  \noindent \textbf{Case 3: $b_H(v)$ is a star centred at a vertex of
    $H$.} Then $b_H(v)$ has four vertices and its centre is a vertex
  of $H$. Moreover, for each $1\leq i \leq 3$ the graph $b_H(v_i)$ has
  three vertices and is either a clique or a star centred at a vertex
  of $H$. \medskip
  
  From the description of the split decomposition of an
  induced subgraph obstruction for linear rank-width $1$, one can
  clearly construct all the induced subgraph obstructions for linear
  rank-width $1$.  In Figure \ref{fig:lrw1-obs} we have recalled the
  vertex-minor obstructions for linear rank-width $1$. See
  \cite{AdlerFP13} for the complete list of pivot-minor and induced
  subgraph obstructions for linear rank-width $1$.\medskip

  It remains now to prove the equivalence between (2) and (3).
  Caterpillars have clearly linear rank-width $1$ and are the only
  trees with paths as split decomposition trees. Now if a graph $G$
  has linear rank-width $1$, then from Theorem \ref{thm:lrw1} it is a
  distance hereditary graph and its split decomposition tree $T_G$ is
  a path and each node $u$ is such that $V(u)\ne \emptyset$. By
  Theorem \ref{thm:lctree} it is then locally equivalent to a tree,
  say $T$. By \cite[Theorem 4.4]{Bouchet88} $T_G$ is also the split
  decomposition tree of $T$, which concludes the proof.
\end{proof}

\begin{figure}[h!]
 	\begin{center}
		\includegraphics[scale=1.3]{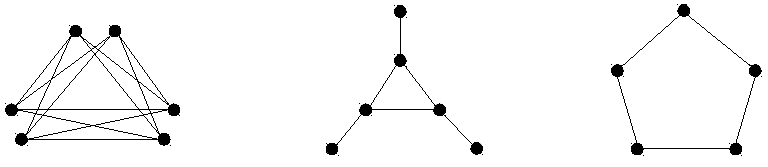}
	\caption{Forbidden vertex-minors for linear rank-width $1$ graphs.
a) co-($3K_2$) b) Net and c) $C_5$.  }
	\label{fig:lrw1-obs}
	\end{center}
\end{figure}

\subsection{Recognition algorithm}
Thanks to the previous characterisations, graphs of linear rank-width
$1$ can be recognised in time $O(n+m)$.

\begin{thm}\label{thm:lrw1-recognition} One can decide in time
  $O(n+m)$ if a connected graph $G$ with $n$ vertices and $m$ edges have linear
  rank-width $1$, and if not construct an obstruction. A linear
  ordering of cutrank $1$ can be constructed also in time $O(n+m)$ if
  it exists. 
\end{thm}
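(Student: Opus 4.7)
The plan is to reduce the recognition problem to computing the canonical split decomposition and inspecting it, using the structural characterization provided by Theorem \ref{thm:lrw1} and Corollary \ref{cor:locCaterpillar}. First, I would invoke Theorem \ref{thm:CED} to compute $\cD_G$ and its split decomposition tree $T_G$ in $O(n+m)$ time. Then, in a single scan of $\cD_G$ and $T_G$, I would verify two local conditions: that every block of $\cD_G$ is a clique or a star (equivalently, that $G$ is distance hereditary, since no block is prime), and that every node of $T_G$ has degree at most $2$ in $T_G$ (equivalently, that $T_G$ is a path). Both conditions can be checked in $O(n+m)$.

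If both checks succeed, then by Theorem \ref{thm:lrw1} we have $\lrw{G}=1$, and the proof of Proposition \ref{prop:sdtlrw1} is constructive: enumerate the nodes $u_1,\ldots,u_p$ of $T_G$ from one endpoint to the other and concatenate arbitrary linear orderings of the sets $V(u_i)$. This produces a linear ordering of $V(G)$ of cutrank $1$ in an additional $O(n)$ time, which establishes the second half of the statement.

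If either condition fails, I would output an obstruction. When $\cD_G$ contains a prime block, $G$ is not distance hereditary, and the prime block exhibits one of the classical induced-subgraph obstructions for distance-hereditary graphs (a hole of length at least $5$, the house, the gem, or the domino); each is also an obstruction for linear rank-width $1$ and can be extracted from the prime block in $O(n+m)$. When instead $T_G$ is not a path, some internal node $v$ has at least three neighbours $v_1,v_2,v_3$ in $T_G$. I would then pick one vertex $x_i\in V(G)$ in some block of the subtree hanging off $v_i$ on the side away from $v$, and combine these three vertices with $V(v)$. A case analysis on the shape of $b_G(v)$ (clique, star centred at a marker vertex, or star centred at a vertex of $G$), mirroring the three cases in the proof of Corollary \ref{cor:locCaterpillar}, shows that a bounded sequence of local complementations on the resulting constant-size induced subgraph realises one of co-$(3K_2)$, Net, or $C_5$ (Figure \ref{fig:lrw1-obs}). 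Locating the representatives $x_1,x_2,x_3$ amounts to three traversals in $T_G$ and the subsequent verification is $O(1)$, so the whole extraction stays within $O(n+m)$.

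The main obstacle lies in the obstruction-extraction step for the non-path case. The conceptual content is already carried by Theorem \ref{thm:lrw1}, so the algorithmic core is a single linear scan of $\cD_G$ together with the path-check on $T_G$. The potentially delicate point is certifying in linear time the exact forbidden pattern: we must choose the three branch representatives carefully so that, after a constant number of local complementations, they together with $V(v)$ yield an induced subgraph isomorphic to one of the three vertex-minor obstructions of Figure \ref{fig:lrw1-obs}. Because the number of shapes $b_G(v)$ can take is bounded and each shape is handled by a short, explicit recipe from the proof of Corollary \ref{cor:locCaterpillar}, this reduces to an $O(1)$ verification once the representatives have been identified.
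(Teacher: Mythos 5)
Your overall architecture (compute the canonical decomposition, check for prime blocks and for the path property, use Proposition \ref{prop:sdtlrw1} constructively in the positive case) matches the paper's, but your obstruction-extraction step in the non-path case has a genuine gap. You propose to pick \emph{one} vertex $x_i$ from each of the three branches hanging off the high-degree node $v$ and to combine these with $V(v)$, then to reach co-$(3K_2)$, Net or $C_5$ by local complementations. This cannot work: local complementation preserves linear rank-width (Proposition \ref{prop:2.1}), so the chosen induced subgraph must already have linear rank-width $2$, yet with one vertex per branch it has at most $|V(v)|+3$ vertices and, more to the point, each branch contributes a single vertex, so the split decomposition tree of the chosen subgraph collapses (e.g.\ in the clique case the three representatives just form a triangle inside the clique block, and $V(v)$ may even be empty, leaving only $3$ vertices). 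The paper's proof instead selects \emph{two} vertices from each $G^{v_iv}$, with carefully prescribed adjacency conditions relative to $V(G)\setminus V(G^{v_iv})$ (two non-adjacent vertices both seen across the split, or two adjacent vertices exactly one of which is seen across, etc., depending on the type of $b_G(v)$), precisely so that each branch survives as a leaf block and the induced subgraph's split decomposition tree is a star with three leaves, which by Proposition \ref{prop:obstructions} certifies an induced subgraph obstruction directly, with no local complementation needed.

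A secondary, smaller issue: for the non-distance-hereditary case you assert that a hole, house, gem or domino ``can be extracted from the prime block in $O(n+m)$,'' but a prime block contains marker vertices that are not vertices of $G$, so turning a witness inside the block into an induced subgraph of $G$ requires an additional substitution argument that you do not supply. The paper avoids this entirely by running the certifying distance-hereditary recognition algorithm of \cite{DamiandHP01} \emph{before} computing the decomposition, obtaining the induced subgraph obstruction as part of that algorithm's output. Your route is salvageable, but as written both certificate-producing branches are underspecified, and the three-representative recipe for the non-path case is incorrect as stated.
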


\begin{proof} Let $G$ be a graph with $n$ vertices and $m$
  edges. Thanks to \cite{DamiandHP01} one can check in time $O(n+m)$
  if $G$ is a distance hereditary graph and if not exhibit an induced
  subgraph obstruction. From the induced subgraph obstruction one can
  exhibit, if he prefers, vertex-minor or pivot-minor obstructions.

  We will now assume that $G$ is distance hereditary. We first
  construct a split decomposition tree $T_G$ of $G$ which can be done
  in time $O(n+m)$ \cite{Dahlhaus00}. By Theorem
  \ref{thm:lrw1} $G$ has linear rank-width $1$ if and only if $T_G$ is
  a path. Since testing whether $T_G$ is a path can be done in time
  $O(|V(T_G)|)=O(n)$, we can test in time $O(n+m)$ if $G$ has linear
  rank-width at most $1$. Knowing that $T_G$ is a path, one can
  construct a linear ordering of $V(G)$ of cutrank $1$ in time $O(n)$
  by Proposition \ref{prop:sdtlrw1}.

  We now explain how to exhibit an induced subgraph obstruction if
  $T_G$ is not a path since from an induced subgraph obstruction one
  can exhibit a vertex-minor or a pivot-minor obstruction. If $T_G$ is
  not a path, then there exists an internal node $v$ of degree at
  least three, that can be found in time $O(n)$, and let us choose three
  of its neighbour nodes say $v_1,v_2,v_3$. We need to look at the
  type of the node $v$. \medskip

  \noindent \textbf{Case 1: $b_G(v)$ is a clique.} Then none of the
  $v_i$s is a clique. In each of the graphs $G^{v_iv}$ take either
  two non adjacent vertices that are adjacent to vertices in
  $V(G)\setminus V(G^{v_iv})$ or two adjacent vertices such that
  exactly one is adjacent to vertices in $V(G)\setminus
  V(G^{v_iv})$. \medskip

  \noindent \textbf{Case 2: $b_G(v)$ is a star centred at a marker
    vertex.} We may assume without loss of generality in this case
  that there exists a marker vertex in $v_1$ which is a neighbour
  marker of the centre of $b_G(v)$. Choose in $V(G)\setminus
  V(G^{v_1v})$ either two adjacent vertices or two non adjacent
  vertices, and in each of the graphs $G^{v_iv}$, for $2\leq i \leq 3$
  choose two adjacent vertices such that at least one is adjacent to a
  vertex in $V(G)\setminus V(G^{v_iv})$. \medskip

\noindent \textbf{Case 3: $b_G(v)$ is a star centred at a vertex of
  $G$.} Take the centre of $b_G(v)$, and choose in each of the graphs
$G^{v_iv}$ two adjacent vertices such that at least one is adjacent to
a vertex in $V(G)\setminus V(G^{v_iv})$. \medskip

  \noindent One checks easily that in each of the cases above the split
  decomposition tree of the chosen induced subgraph is a star with three
  leaves and is minimal with respect to that property, hence is an
  induced subgraph obstruction. 
\end{proof}

\bibliography{lrw1}

\providecommand{\bysame}{\leavevmode\hbox to3em{\hrulefill}\thinspace}
\providecommand{\MR}{\relax\ifhmode\unskip\space\fi MR }
\providecommand{\MRhref}[2]{%
  \href{http://www.ams.org/mathscinet-getitem?mr=#1}{#2}
}
\providecommand{\href}[2]{#2}
\begin{thebibliography}{10}

\bibitem{AdlerFP13}
Isolde Adler, Arthur~M. Farley, and Andrzej Proskurowski, \emph{Obstructions
  for linear rank-width at most $1$}, To appear in Discrete Applied Mathematics
  (2013).

\bibitem{AdlerK13}
Isolde Adler and Mamadou~Moustapha Kant{\'e}, \emph{Linear rank-width and
  linear clique-width of trees}, WG, 2013.

\bibitem{BandeltM86}
Hans-J\"urgen Bandelt and Henry~Martyn Mulder, \emph{Distance-hereditary
  graphs}, Journal of Combinatorial Theory, Series B \textbf{41} (1986), no.~2,
  182--208.

\bibitem{Bouchet87}
Andr\'e Bouchet, \emph{Digraph decompositions and eulerian systems}, SIAM
  Journal on Algebraic and Discrete Methods \textbf{8} (1987), no.~3, 323--337.

\bibitem{Bouchet88}
\bysame, \emph{Transforming trees by successive local complementations},
  Journal of Graph Theory \textbf{12} (1988), no.~2, 195--207.

\bibitem{Bouchet94}
\bysame, \emph{Circle graph obstructions}, Journal of Combinatorial Theory,
  Series B \textbf{60} (1994), no.~1, 107--144.

\bibitem{Courcelle06}
Bruno Courcelle, \emph{{The monadic second-order logic of graphs XVI :
  Canonical graph decompositions}}, Logical Methods in Computer Science
  \textbf{2} (2006), no.~2.

\bibitem{CourcelleO00}
Bruno Courcelle and Stephan Olariu, \emph{Upper bounds to the clique width of
  graphs}, Discrete Applied Mathematics \textbf{101} (2000), no.~1-3, 77--114.

\bibitem{CourcelleO07}
Bruno Courcelle and Sang-il Oum, \emph{Vertex-minors, monadic second-order
  logic, and a conjecture by seese}, Journal of Combinatorial Theory, Series B
  \textbf{97} (2007), no.~1, 91--126.

\bibitem{CunninghamE80}
William~H. Cunnigham and Jack Edmonds, \emph{A combinatorial decomposition
  theory}, Canadian Journal of Mathematics \textbf{32} (1980), 734--765.

\bibitem{Dahlhaus00}
Elias Dahlhaus, \emph{Parallel algorithms for hierarchical clustering, and
  applications to split decomposition and parity graph recognition}, Journal of
  Algorithms \textbf{36} (2000), no.~2, 205--240.

\bibitem{DamiandHP01}
Guillaume Damiand, Michel Habib, and Christophe Paul, \emph{A simple paradigm
  for graph recognition: application to cographs and distance hereditary
  graphs.}, Theoretical Computer Science \textbf{263} (2001), no.~1-2, 99--111.

\bibitem{Diestel05}
Reinhard Diestel, \emph{{Graph Theory (Graduate Texts in Mathematics)}},
  Springer, 2005.

\bibitem{Ganian10}
Robert Ganian, \emph{Thread graphs, linear rank-width and their algorithmic
  applications}, IWOCA (Costas~S. Iliopoulos and William~F. Smyth, eds.),
  Lecture Notes in Computer Science, vol. 6460, Springer, 2010, pp.~38--42.

\bibitem{HlinenyO08}
Petr Hlinen{\'y} and Sang il~Oum, \emph{Finding branch-decompositions and
  rank-decompositions}, SIAM J. Comput. \textbf{38} (2008), no.~3, 1012--1032.

\bibitem{KwonO13}
O~joung Kwon and Sang il~Oum, \emph{Graphs of small rank-width are pivot-minors
  of graphs of small tree-width}, To appear in Discrete Applied Mathematics
  (2013).

\bibitem{KanteR09}
Mamadou~Moustapha Kant{\'e} and Micha{\"e}l Rao, \emph{Directed rank-width and
  displit decomposition}, WG (Christophe Paul and Michel Habib, eds.), Lecture
  Notes in Computer Science, vol. 5911, 2009, pp.~214--225.

\bibitem{Oum05}
Sang-Il Oum, \emph{Graphs of bounded rank width}, Ph.D. thesis, Princeton
  University, 2005.

\bibitem{Oum05a}
\bysame, \emph{Rank-width and vertex-minors.}, Journal of Combinatorial Theory,
  Series B \textbf{95} (2005), no.~1, 79--100.

\bibitem{OumS06}
Sang-Il Oum and Paul~D. Seymour, \emph{Approximating clique-width and
  branch-width.}, Journal of Combinatorial Theory, Series B \textbf{96} (2006),
  no.~4, 514--528.

\bibitem{Rao08}
Micha{\"e}l Rao, \emph{Solving some np-complete problems using split
  decomposition}, Discrete Applied Mathematics \textbf{156} (2008), no.~14,
  2768--2780.

\bibitem{RobertsonS83}
Neil Robertson and Paul~D. Seymour, \emph{Graph minors. {I.} excluding a
  forest}, Journal of Combinatorial Theory, Series B \textbf{35} (1983), no.~1,
  39--61.

\bibitem{RobertsonS86}
\bysame, \emph{Graph minors. {II.} algorithmic aspects of tree-width}, Journal
  of Algorithms \textbf{7} (1986), no.~3, 309--322.

\end{thebibliography}
\bibliographystyle{amsplain}

\today

\end{document}